\newcommand{\Keywords}[1]{\par\noindent
{\small{\em Keywords\/}: #1}}
\def\N{\mathbb{N}}
\def\Q{\mathbb{Q}}
\def\R{\mathbb{R}}
\def\K{\mathbb{K}}
\def\F{\mathbb{F}}
\let\bfs\bfseries
\def\parag#1{\medskip\noindent{\bfs #1}\ \ }
\newenvironment{proof}{\parag{Proof}}{\null\hfill$\Box$\par\medskip}
\DeclareSymbolFont{lasy}{U}{lasy}{m}{n}
\let\Box\undefined
\DeclareMathSymbol\Box{\mathord}{lasy}{"32}
\def\step{\mathop{\mathrm{step}}}
\newtheorem{defi}{Definition}[section]
\newtheorem{prop}{Proposition}[section]
\newtheorem{teor}{Theorem}[section]
\newtheorem{ejem}{Example}[section]
\newtheorem{remark}{Remark}[section]
\newtheorem{lema}{Lemma}[section]
\title{A canonical form for the continuous piecewise  polynomial functions}
\author{ Jorge Caravantes\thanks{Partially supported by Spanish project MTM2011-25816-C02-02}\\
Dpto. de \'Algebra \\  Universidad Complutense de Madrid, Spain\\  {\tt jorge\_caravant@mat.ucm.es}\\
M. Angeles Gomez-Molleda\\ Dpto. de \'Algebra,
Geometria y Topologia \\ Universidad de M\'alaga, Spain\\  {\tt  gomezma@agt.cie.uma.es}\\ Laureano
Gonzalez--Vega\,\,$^{*}$\,\, \\
           Dpto. de Matematicas, Estadistica y Computacion \\  Universidad de Cantabria, Spain\\  {\tt
laureano.gonzalez@unican.es} 
}
\date{ }
\begin{document}
\maketitle

\begin{abstract} We present in this paper a canonical form for the  elements in the ring of continuous piecewise
polynomial functions.  This new representation is based on the use of a  particular class of functions
$$\{C_i(P):P\in\Q[x],i=0,\ldots,\deg(P)\}$$ defined by $$C_i(P)(x)= 
\left\{ \begin{array}{cll}0 & \mbox{ if } & x \leq \alpha
\\ P(x) & \mbox{ if } & x \geq \alpha \end{array} \right.$$ where 
$\alpha$ is the $i$-th real root of the polynomial $P$. These functions will allow us to represent and manipulate easily every continuous piecewise  polynomial function through the use of the
corresponding canonical form.

It will be also shown how to produce a ``rational" representation of  each function $C_{i}(P)$  allowing its
evaluation by performing only operations in $\Q$ and avoiding the use of any real  algebraic number.

\Keywords{Continuous piecewise polynomial functions;
Pierce-Birkhoff conjecture;
Canonical form for functions;
Conversion algorithms.}
\end{abstract}

\section{Introduction} The aim of this paper is to give a canonical representation for the elements in the ring of the continuous piecewise polynomial functions. While general piecewise polynomial functions are interesting in general, most applications of them to CAGD require the functions to be continuous. In fact, splines are, by definition, sufficiently smooth piecewise-defined polynomial functions. This, then, includes the special cases b-splines and NURBS. Since some important families of curves are continuous piecewise defined polynomials, it seems useful to have a specifically defined representation for them that can take advantage of such continuity.

In \cite{VM} von Mohrenschildt proposed  a normal form, for  piecewise polynomial functions, by means of the
{\sl step} functions
$$
\step(x)=\left\{\begin{array}{lll} 1  & \mbox{ if } & x>0, \\ 0 & 
\mbox{ if } &   x\leq 0, \end{array}\right.
$$ 
which are discontinuous. {In \cite{Ch}, Chicurel-Uziel used characteristic functions of semilines to introduce a very natural form, with the same discontinuity issue. Furthermore}, Carette \cite{C} has worked on a canonical form for piecewise defined functions using range partitions. 

We are interested, however, in representing canonically the continuous piecewise polynomial functions but based on a collection of continuous functions, which should prevent errors to grow out of control when evaluating on approximate numbers. Such a suitable set of  continuous functions was introduced in \cite{D} and \cite{M}. They define, for every non-negative integer $i$, a mapping
$C_i$ from the set of polynomials on the set of continuous piecewise  polynomial functions, such that
$$C_i(P)(x)= \left\{ \begin{array}{cll}0 & \mbox{ if } & x \leq \alpha
\\ P(x) & \mbox{ if } & x \geq \alpha \end{array} \right.$$ where 
$\alpha$ is the $i$-th distinct real root of the polynomial $P$. If
$i$ is bigger than the number of real roots of $P$ then $C_i(P)$ is defined as $0$.

In \cite{D} and \cite{M} the $C_i(P)$ functions were used to study  the Pierce--Birkhoff conjecture. This is a well--known and classical open problem in Real Algebraic Geometry asking if  every  continuous and piecewise polynomial function $h\colon{\R}^n\longrightarrow{\R}$ defined over $\Q$ can be  represented by means of a sup--inf expression over a finite set of polynomials with rational coefficients. This conjecture has been  proved in the affirmative sense only for $n=1$ and $n=2$ (see \cite{D} and \cite{M}) and remains still open for $n\geq 3${, while results in \cite{B} and \cite{O} lead to a proof in certain polyhedral domains}.

In this paper we show that they provide a canonical representation which is easily computable from the piecewise expression of the functions.  Moreover performing algebraic operations between  canonical forms of continuous piecewise polynomial functions is simple and fast. In section~\ref{prel} we give the  complete definition of the $C_i$ functions along with some of their properties. Section~\ref{forma} is devoted to the proof of  existence and uniqueness of our canonical form.  In section~\ref{operaciones} we show how to obtain easily the canonical form for the sum, product and composition of continuous piecewise polynomial functions. Section~\ref{evaluacion} shows  how to produce a ``rational" representation of each function $C_{i}(P)$  allowing its evaluation by performing only  operations in $\Q$ and avoiding the use of any real algebraic number. Before the conclusions, Section~\ref{complejidad} attacks some complexity aspects of the canonical form and the operations.

\section{Preliminaries}\label{prel} Let us denote by ${\cal  CP}(\Q[x])$ the set of continuous piecewise polynomial
functions from $\R$ to $\R$ defined by polynomials with rational coefficients.

In order to represent canonically the continuous piecewise polynomial  functions,  we will use the set of mappings
$$C_i:\Q[x]
\rightarrow {\cal CP}(\Q[x]),$$ $i \in \N$,    presented in \cite{D}  and \cite{M}.

\begin{defi}\label{function}
    Let $P(x) \in \Q[x]\backslash\{0\}$, $\deg(P)=n $, $\{\alpha_{1},\ldots, 
\alpha_{r}\}$ the set of real roots of $P$,
$\alpha_{0} = -\infty $,
$\alpha_{k} = +\infty$ for every $k>r$. Then, for every $i \in \N\cup\{0\}$, $x \in \R$,
$$ C_{i}(P)(x)=\left\{ \begin{array}{ccc}0 & \mbox{\ if\ } & x\leq 
\alpha_{i} \\ P(x) &
\mbox{\ if\ } & x\geq \alpha_{i}.\end{array} \right.
$$
For completeness, we also define $C_i(P)=0$ when $P=0$.
\end{defi}

The following result can be found in \cite{D} and \cite{M} as basis  for the proof of Pierce-Birkhoff conjecture
for the case one and two dimensional. It gives a natural representation of  continuous piecewise polynomial
functions in terms of the $C_i$ functions.

\begin{prop}\label{teorLMG-V}
      Let $\phi$ be a continuous piecewise polynomial function
$$\phi(x)=\left\{ \begin{array}{lll} Q_{1}(x) &\mbox{ if }& x \leq 
\alpha_{1}\\ Q_{2}(x) &\mbox{ if }& \alpha_{1}\leq x\leq
\alpha_{2} \\
           & \vdots      & \\ Q_{N}(x) &\mbox{ if }& \alpha_{N-1}\leq  x\end{array}
   \right.$$ with $Q_i \in \Q[x]$, $Q_i\ne Q_{i+1}$ for all $i$, and $\alpha_j \in \R.$ Then $\phi$  can be written in the following way:
          $$ \phi(x)= Q_{1}(x) + 
\sum_{i=1}^{N-1}C_{s(i)}(\Delta_{i})(x)$$ where $\Delta_{i}=  Q_{i+1}-Q_{i}$ and $s(i)$ is the position index of
$\alpha_{i}$ as a root of $\Delta_{i}$.
\end{prop}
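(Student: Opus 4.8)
The plan is to verify the claimed identity pointwise: I fix an arbitrary $x\in\R$, locate it in one of the defining intervals, and show that the right-hand side collapses by telescoping to $\phi(x)$.

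First I would record the role of continuity. Since $\phi$ is continuous and the two branches $Q_i$ and $Q_{i+1}$ meet at $\alpha_i$, we have $Q_{i+1}(\alpha_i)=Q_i(\alpha_i)$, hence $\Delta_i(\alpha_i)=0$. Thus $\alpha_i$ is genuinely a real root of $\Delta_i$, and because the hypothesis $Q_i\ne Q_{i+1}$ forces $\Delta_i\ne 0$, the position index $s(i)$ is well defined and $C_{s(i)}(\Delta_i)$ is given by Definition~\ref{function}. By the very choice of $s(i)$, the threshold root of $\Delta_i$ appearing in that definition is exactly $\alpha_i$, so
$$C_{s(i)}(\Delta_i)(x)=\left\{\begin{array}{lll} 0 & \mbox{ if } & x\le\alpha_i, \\ \Delta_i(x) & \mbox{ if } & x\ge\alpha_i. \end{array}\right.$$

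Next, writing $\alpha_0=-\infty$ and $\alpha_N=+\infty$, suppose $x$ satisfies $\alpha_{k-1}\le x\le\alpha_k$, so that $\phi(x)=Q_k(x)$. Evaluating each summand via the formula above, for $i\le k-1$ we have $x\ge\alpha_i$ and hence $C_{s(i)}(\Delta_i)(x)=\Delta_i(x)=Q_{i+1}(x)-Q_i(x)$, whereas for $i\ge k$ we have $x\le\alpha_i$ and hence $C_{s(i)}(\Delta_i)(x)=0$. Substituting into the right-hand side gives
$$Q_1(x)+\sum_{i=1}^{k-1}\bigl(Q_{i+1}(x)-Q_i(x)\bigr),$$
which telescopes to $Q_1(x)+Q_k(x)-Q_1(x)=Q_k(x)=\phi(x)$, as required. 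The extreme cases $k=1$ (empty sum) and $k=N$ (full sum) are covered by the same computation.

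The argument is essentially routine once the set-up is right; the only point demanding care is the overlap at the breakpoints, where $x=\alpha_i$ lies simultaneously in two consecutive closed intervals. Here I would note that the branch distinction in Definition~\ref{function} is consistent precisely because $\Delta_i(\alpha_i)=0$: at $x=\alpha_i$ both the ``$0$'' and the ``$\Delta_i(x)$'' branches of $C_{s(i)}(\Delta_i)$ return $0$, and likewise the two polynomial pieces of $\phi$ agree there, so the evaluation above is unambiguous on the closed intervals. This is the one subtlety to watch, and it is resolved entirely by the continuity hypothesis.
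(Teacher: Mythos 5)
Your proof is correct and follows essentially the same route as the paper's: the paper's own argument simply notes that continuity makes each $\alpha_i$ a root of $\Delta_i$ and then asserts the identity, whereas you additionally carry out the pointwise telescoping verification that the paper leaves implicit. No issues.
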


\begin{proof} For every $i\in\{1,\ldots,N-1\}$ we define the  polynomial in $\Q[x]$ given by:
$$\Delta_i(x)=Q_{i+1}(x)-Q_i(x).$$ The continuity of $\phi$ implies  that every $\alpha_i$ is a real root of
$\Delta_i(x)$. Let $s(i)$ be the position index of $\alpha_i$ as root  of $\Delta_i(x)$. In these conditions:
$$\phi=Q_1+\sum_{i=1}^{N-1}C_{s(i)}(\Delta_i)$$ as desired.
\end{proof}

\begin{ejem}\label{ejpsi}
  Let $\psi$ be defined by
     $$
     \psi(x)\!=\!\! \left\{ \!\! \begin{array}{lcc}
     x^6+1  &\mbox{ if }& x\leq \-1\cr
     x^4-\frac{1}{2}x^3-\frac{7}{2}x^2-x+6 & \!\! \mbox{ if }& \!\! \-1\leq x\leq \sqrt{2} \cr
     x^4+x^3-5x^2-4x+9 &\mbox{ if }& \sqrt{2}\leq x\end{array}
      \right.
     $$
In this case:
$$\Delta_{1}(x)= - \left( x-1 \right)  \left( {x}^{5}+\,{x}^{4}+\frac{1}{2}{x}^{2}+4\,x+
5 \right)$$
and $$\Delta_{2}(x)= \frac{3}{2}\, \left( x-1 \right)  \left( {x}^{2}-2 \right).$$
Since $1$ is the second real root of $\Delta_1(x)$ and $\sqrt{2}$ the third real root of $\Delta_2(x)$, we can write 
$$\psi(x)=x^6+1 +C_2(\Delta_1(x))+C_3(\Delta_2(x)).$$

\end{ejem} 

Proposition~\ref{teorLMG-V} along with the following properties of
$C_i$ functions will allow us to give a canonical representation of  the elements of ${\cal CP}(\Q[x])$.

\begin{prop}\label{prop1}
      Let $P(x), Q(x)$ and $H(x) \in \Q[x]$, $\alpha_1< \ldots <\alpha_r$ and
$\beta_1< \ldots < \beta_s$ the real roots of $P$ and $Q$ respectively. Then
      \begin{itemize}
      \item[i)] If $\alpha_i \geq \beta_j$ then
              $ C_{i}(P)C_{j}(Q)  = Q C_{i}(P)$.
      \item[ii)] $C_i(P^k)=P^{k-1}C_i(P)$.
      \item[iii)]  If $P = QH $ and $\alpha_i=\beta_k$ then
  $C_{i}(P)= HC_{k}(Q)$.
      \end{itemize}
\end{prop}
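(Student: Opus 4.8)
The plan is to establish all three identities by direct pointwise evaluation in the ring of functions, exploiting that each $C_i(\cdot)$ is governed by a single switching abscissa, namely the relevant root, below which it is $0$ and above which it equals the polynomial. In every case I would split $\R$ at that abscissa and compare the two sides on each half-line; since the switching point is always a root of the polynomial being turned on, both branches give the value $0$ there and no compatibility question arises at the seam. The degenerate situations in which an index exceeds the number of real roots (so the corresponding $\alpha$ is $+\infty$ and the function is identically $0$) or a polynomial vanishes reduce each identity to $0 = 0$ via the conventions in Definition~\ref{function}.

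For i), I would switch at $\alpha_i$. When $x \leq \alpha_i$ we have $C_i(P)(x) = 0$, so both $C_i(P)(x)\,C_j(Q)(x)$ and $Q(x)\,C_i(P)(x)$ vanish. When $x \geq \alpha_i$ we have $C_i(P)(x) = P(x)$, and the hypothesis $\alpha_i \geq \beta_j$ gives $x \geq \beta_j$, hence $C_j(Q)(x) = Q(x)$; both sides then equal $P(x)Q(x)$, establishing the identity.

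For ii) and iii) the reasoning is the same modulo identifying the switching points. In ii) the preliminary remark is that $P$ and $P^k$ share exactly the same set of distinct real roots, differing only in multiplicities, so $C_i(P^k)$ switches at $\alpha_i$ too; then for $x \leq \alpha_i$ both $C_i(P^k)(x)$ and $P(x)^{k-1}C_i(P)(x)$ are $0$, while for $x \geq \alpha_i$ they equal $P(x)^k$ and $P(x)^{k-1}P(x)$ respectively. In iii) the hypotheses $P = QH$ and $\alpha_i = \beta_k$ mean $C_i(P)$ and $C_k(Q)$ switch at a common abscissa: below it both $C_i(P)(x)$ and $H(x)C_k(Q)(x)$ vanish, and above it the two sides read $P(x) = Q(x)H(x)$ and $H(x)Q(x)$.

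None of this poses a genuine analytic difficulty; the content is entirely the bookkeeping of which root each index selects. The closest thing to an obstacle is the observation needed in ii), that raising $P$ to the power $k$ leaves the list of \emph{distinct} real roots unchanged, so that the index $i$ keeps pointing to the same $\alpha_i$; once this is in place, every branch-by-branch comparison is immediate and the boundary values match automatically because the switching abscissae are roots.
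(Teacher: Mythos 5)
Your proof is correct: the branch-by-branch comparison at the common switching abscissa, together with the observation that $P$ and $P^k$ share the same distinct real roots, is exactly the ``straightforward'' verification the paper has in mind (it omits the proof entirely, stating only that it is straightforward). Nothing is missing, and your handling of the degenerate cases via the conventions of Definition~\ref{function} is a welcome extra precision.
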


The proof of the proposition is straightforward.

\begin{ejem}\label{ejpsi2}
  Consider the function $\psi$ in Example~\ref{ejpsi}
      written in the form 
$$\psi(x)=x^6+1 +C_2(\Delta_1(x))+C_3(\Delta_2(x)).$$
According to Proposition~\ref{prop1}, $$C_2(\Delta_1(x))= -  \left( {x}^{5}+\,{x}^{4}+\frac{1}{2}{x}^{2}+4\,x+
5 \right)C_1\left( x-1 \right) $$ and
$$C_3(\Delta_2(x))=\frac{3}{2}\, \left( x-1 \right)  C_2\left( {x}^{2}-2 \right).$$
Therefore the pieceswise polynomial function can be written as 
 $$\psi(x)=x^6+1 -( 
{x}^{5}+{x}^{4}+\frac{1}{2}{x}^{2}+4\,x+5)C_1(x-1)+\frac{3}{2}\!\left( 
x\!-\!1 \right)\!C_2(x^2-2).$$

\end{ejem}

\section{The Canonical Form}\label{forma}

The following theorem proves the existence and uniqueness of a  canonical representation for the continuous
piecewise polynomial functions in terms of the functions $C_i$ and polynomials:

\begin{teor}\label{canonica}
      Let $\phi $: $\R\rightarrow\R $ be a continuous piecewise  polynomial function defined by polynomials in
$\Q[x]$. Then
$\phi$ can be written uniquely in the form
        $$ \phi= F_{0}+\sum_{i=1}^{N}F_{i}C_{u_{i}}(P_{i})$$ where
$F_{0}\in \Q[x]$, $F_{i}\in\Q [x]\setminus\{{0}\}$,
$P_{i} \in \Q[x] \setminus \{0\}$ is monic, irreducible, with at  least one real root  and, for every $i \in \{1,
\ldots, N\}$, the pairs
$(P_i, u_i)$ are different.
\end{teor}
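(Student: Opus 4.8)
The plan is to establish existence and uniqueness separately, taking Proposition~\ref{teorLMG-V} as the starting point for the former and carrying out a direct breakpoint analysis for the latter.

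For \emph{existence}, I would start from the representation $\phi = Q_1 + \sum_{i=1}^{N-1} C_{s(i)}(\Delta_i)$ furnished by Proposition~\ref{teorLMG-V} and reduce each term to one built on a monic irreducible polynomial. Fix a term $C_{s(i)}(\Delta_i)$ and let $\alpha$ denote its breakpoint, i.e. the $s(i)$-th real root of $\Delta_i$. Factoring $\Delta_i$ over $\Q$ as a constant times a product of powers of distinct monic irreducible polynomials, the number $\alpha$ is a root of exactly one irreducible factor $P$, since distinct monic irreducibles over $\Q$ share no roots. Writing $\Delta_i = P\cdot H$ with $H \in \Q[x]\setminus\{0\}$ and letting $u$ be the position of $\alpha$ as a real root of $P$, part iii) of Proposition~\ref{prop1} (with $\Delta_i$, $P$, $H$ playing the roles of its $P$, $Q$, $H$) gives $C_{s(i)}(\Delta_i) = H\,C_u(P)$; one checks this identity persists even when $P^2 \mid \Delta_i$, because both sides vanish for $x \le \alpha$ and equal $\Delta_i(x)$ for $x \ge \alpha$. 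After applying this reduction to every term I would collect summands sharing the same pair $(P,u)$ by adding their polynomial coefficients and discarding any pair whose coefficient becomes $0$. Setting $F_0 = Q_1$ then produces a representation of the required shape.

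For \emph{uniqueness}, it suffices to show that the zero function admits only the trivial representation: subtracting two representations of $\phi$ and merging the finitely many terms that share a pair $(P,u)$, the claim reduces to proving that $D_0 + \sum_{l=1}^{L} D_l\, C_{w_l}(R_l) = 0$ as a function, with the pairs $(R_l,w_l)$ distinct, each $R_l$ monic irreducible with a real root and $D_l \in \Q[x]$, forces $D_0 = 0$ and every $D_l = 0$. The decisive observation is that the breakpoints $\rho_l$ (the $w_l$-th real root of $R_l$) are \emph{pairwise distinct}: two terms with the same $R_l$ but different indices pick out different roots, while two terms with distinct monic irreducible $R_l$ can share no root at all. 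This is precisely where the monic-irreducible hypothesis is needed, and it is the main obstacle to anticipate: without it, distinct terms could collide at a common breakpoint and a nonzero combination $D_a R_a + D_b R_b$ could vanish on an interval.

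Granting the distinctness of breakpoints, the remaining argument is a left-to-right induction. On the interval to the left of the smallest breakpoint every $C_{w_l}(R_l)$ vanishes, so the function equals the polynomial $D_0$ there; being identically zero on an interval forces $D_0 = 0$. Ordering the breakpoints as $\rho_{(1)} < \cdots < \rho_{(L)}$, on the open interval $(\rho_{(m)},\rho_{(m+1)})$ the function equals $\sum_{k=1}^{m} D_{(k)} R_{(k)}$, since precisely the terms with index $k \le m$ are active. Assuming inductively that $D_{(1)} = \cdots = D_{(m-1)} = 0$, this polynomial reduces to $D_{(m)} R_{(m)}$, which must be the zero polynomial; as $R_{(m)} \ne 0$, we conclude $D_{(m)} = 0$. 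Iterating across all intervals gives $D_l = 0$ for every $l$, which translates back into the equality of the two original representations—the same $F_0$ and matching coefficients on each pair $(P,u)$—completing the proof.
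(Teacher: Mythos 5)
Your proposal is correct and follows essentially the same route as the paper: existence by reducing each term of the Proposition~\ref{teorLMG-V} representation to a monic irreducible factor via Proposition~\ref{prop1}, and uniqueness by comparing the two expressions on the successive intervals determined by the (pairwise distinct) breakpoints. Your only deviation is cosmetic: you phrase uniqueness as linear independence of the functions $C_{w}(R)$ over $\Q[x]$ by subtracting the two representations, which tidily avoids the paper's explicit case analysis ($\alpha_1<\beta_1$ versus $\alpha_1=\beta_1$) but rests on exactly the same interval-by-interval argument.
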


\begin{proof} Let $\phi$ be a continuous piecewise polynomial function
\begin{equation}\label{StFrm}
\phi(x)=\left\{ \begin{array}{lll} Q_{1}(x) &\mbox{ if }& x \leq 
\alpha_{1}\\ Q_{2}(x) &\mbox{ if }& \alpha_{1}\leq x\leq
\alpha_{2} \\
           & \vdots      & \\ Q_{M}(x) &\mbox{ if }& \alpha_{M-1}\leq  x\end{array}
   \right.
\end{equation} 
with $Q_i \in \Q[x]$, $Q_i\ne Q_{i+1}$ for all $i$, and $\alpha_j \in \R.$

By Proposition~\ref{teorLMG-V}
$\phi$ can be written as
          $$ \phi(x)= Q_{1}(x) + 
\sum_{i=1}^{M-1}C_{s(i)}(\Delta_{i})(x)$$ where $\Delta_{i}=  Q_{i+1}-Q_{i}$ and $s(i)$ is the position index of
$\alpha_{i}$ as a root of $\Delta_{i}$.

Writing the decomposition of $\Delta_i$ into irreducible factors, 
$$\Delta_i=a_{i} f_{i_{1}}^{e_{i_1}}\ldots f_{i_{r}}^{e_{i_r}}$$   with $e_{i_j}$  $(1 \leq j \leq r)$  positive 
integers, $a_i \in \Q$ and $f_{i_1}, \ldots, f_{i_r} \in
\Q[x]$ distinct monic irreducible polynomials.

The $s(i)$--th real root of $\Delta_i$ is the $u_i$--th real root of 
$f_{i_j}$ for some $1 \leq j \leq r$. By Proposition~\ref{prop1},
$$C_{s(i)}(\Delta_i)=a_{i}f_{i_{1}}^{e_{i_1}}\ldots  f_{i_{j-1}}^{e_{i_{j-1}}}f_{i_{j+1}}^{e_{i_{j+1}}}\ldots
f_{i_{r}}^{e_{i_r}}C_{u_{i}}(f_{i_{j}}^{e_{i_j}})
  =a_{i}f_{i_{1}}^{e_{i_1}}\ldots  f_{i_{j-1}}^{e_{i_{j-1}}}f_{i_{j+1}}^{e_{i_{j+1}}}\ldots  f_{i_{r}}^{e_{i_r}}
f_{i_{j}}^{e_{i_j}-1} C_{u_{i}}(f_{i_{j}})$$

Taking $N=M-1$, $F_0=Q_1$, $P_i=f_{i_j}$ and 
$$F_i=a_{i}f_{i_{1}}^{e_{i_1}}\ldots f_{i_{j-1}}^{e_{i_{j-1}}}f_{i_{j+1}}^{e_{i_{j+1}}}\ldots  f_{i_{r}}^{e_{i_r}}
f_{i_{j}}^{e_{i_j}-1}$$ it is obtained 
\begin{equation}\label{CnFrm}
\phi(x)= F_{0}(x)+\sum_{i=1}^{N}F_{i}(x)C_{u_{i}}(P_{i})(x).
\end{equation}
If 
$k<l$ and $P_k=P_l$ then $\alpha_k$ and $\alpha_l$ are real roots of the same irreducible polynomial
$P_k$.  Since $\alpha_k < \alpha_l$ then $u_k < u_l.$

It remains to prove the uniqueness of this expression. Let us assume  that there exist two representations of
$\phi$:
\begin{equation}\label{primera}
            \phi=F_{0}+\sum_{i=1}^{N}F_{i}C_{u_{i}}(P_{i})
\end{equation}
\begin{equation}\label{segunda}
            \phi=G_{0}+\sum_{i=1}^{M}G_{i}C_{v_{i}}(R_{i})
\end{equation}

Let $\alpha_{i}$ be the $u_{i}$--th  real root of $P_{i}$ and $\beta_{i}$ the
$v_{i}$--th  real root of $R_{i}$. We can assume, without loss of  generality,  that $\alpha_{1} < \alpha_{2}
<\ldots<
\alpha_{N}$ and
$\beta_{1} <\beta_{2} <\ldots<\beta_{M}$ . Let us denote
$\theta_{i} = \min  \{{\alpha_{i} , \beta_{i}}\}$.

Applying expressions (1) and (2) to $-\infty \leq x \leq \theta_1$,
$F_0(x)=\phi(x)=G_0(x)$. Since $F_0$ and $G_0$ are polynomials, then 
$$F_0=G_0.$$

Let us assume now, without loss of generality, that $\alpha_1 \leq 
\beta_1$. If it were $\alpha_1 <\beta_1$, since
$\alpha_1 <\alpha_2$, the open interval $I=(\alpha_1, \min\{\alpha_2, 
\beta_1\})$ would not be empty and for every $x \in I$:
           $$   \begin{array}{lll}
           (\ref{primera}) &\Rightarrow & \phi(x)= F_{0}(x)+ F_{1}(x)P_{1}(x)\cr
           (\ref{segunda}) &\Rightarrow & \phi(x)=  F_{0}(x)\end{array}$$ Thus $F_{1}(x)P_{1}(x)=0$ but $F_1$ and 
$P_1$  have only finitely many roots. Therefore, $\alpha_1=\beta_1$.  Since 
$\alpha_1$ and $\beta_1$ are roots respectively of $P_1$ and
$R_1$, both monic irreducible polynomials, $$P_1=R_1 \mbox{ and } u_1=v_1.$$

Now, if $\alpha_{1}<x<\theta_{2}$ then
           $$\begin{array}{lll}
           (\ref{primera}) &\Rightarrow & \phi(x)= F_{0}(x)+ F_{1}(x)P_{1}(x)\cr
           (\ref{segunda}) &\Rightarrow & \phi(x)= F_{0}(x)+  G_{1}(x)R_{1}(x).\end{array}
           $$
           Therefore $F_{1}(x)P_{1}(x)= G_{1}(x)R_{1}(x)$. Since 
$P_1=R_1$ and it has only finitely many roots, we can conclude that $$F_1=G_1$$

  Repeating the same argument for $i=2, \ldots, \min\{N,M\}$, it is proved that
      $ F_{i} = G_{i}, \; P_{i}=H_{i}, \; u_{i}=v_{i} \mbox{ and }  N=M,$ as desired.
\end{proof}

We will call canonical form the expression obtained in the  preceding theorem.


\def\a{${\,}^{\mathrm{a}}$}
\begin{ejem}\label{ejphi} {\rm   Let us consider the following  continuous piecewise polynomial function
$$\phi(x)=\left\{ \begin{array}{llc}
x^4+4x^3-2x^2 &\mbox{\ if\ }& x\leq \alpha\cr
x^4+x^3-2x^2-3x-3 &\mbox{\ if\ }& \alpha\leq x\leq \sqrt{3} \cr
2x^3+x^2-6x-3 &\mbox{\ if\ }& \sqrt{3}\leq x
\end{array}\right.$$ where $\alpha$ is the unique real root of $x^3+x+1$. In this case
$$\Delta_{1}(x)= -3(x^3+x+1)$$ and
$$\Delta_{2}(x)= -x(x-1)(x^2-3).$$ The only real root of $\Delta_1$  is $\alpha$, while $\Delta_2$ has  roots
$-\sqrt{3}<0<1<\sqrt{3}$ so that $\sqrt{3}$ is the fourth real root  of $\Delta_2$. According to
Proposition~\ref{teorLMG-V},
$\phi(x)$ is equal to
$$x^4+4x^3-2x^2 + C_1(-3(x^3+x+1)) + C_4(-x(x-1)(x^2-3))$$ Following  the proof of Theorem~\ref{canonica} the
canonical form of $\phi(x)$ is obtained:
$$x^4+4x^3-2x^2 -3 C_1(x^3+x+1) -x (x-1) C_2(x^2-3).$$}
\end{ejem}


\section{Using the canonical form for sums, products and compositions}\label{operaciones}

In this section it is shown how to determine the canonical form for  the sum, the product and the composition of continuous piecewise polynomial functions given in canonical form. We also give an explanation on how things work for differentiation and integration. Operations with piecewise polynomial functions can be useful for computer aided geometric modeling in the sense of piecewise polynomially defined movements (or deformations) of piecewise polynomially defined objects.

The canonical  form of the sum is immediately
obtained from the canonical form of the summands. The product needs only to take into  account one of the
properties appearing in Proposition~\ref{prop1}. Composition requires considering certain real 
roots of the given polynomials and applying the corresponding rules for sums and products.

Let $\phi$ and $\psi$ be continuous piecewise polynomial functions in  canonical form, i.e.
              $$ \phi= F_{0}+\sum_{i=1}^{N}F_{i}C_{u_{i}}(P_{i})$$
              $$ \psi= G_{0}+\sum_{j=1}^{M}G_{j}C_{v_{j}}(R_{j})$$

\subsection{Sums and products} \label{suma} If the pairs $(P_i,u_i), 
\; 1\leq i \leq N, \; (R_j, v_j), \; 1 \leq j \leq M$ are all different then the canonical form of $\phi+ \psi$ is
clearly
$$(F_0+G_0)+\sum_{i=1}^{N}  F_iC_{u_{i}}(P_i)+\sum_{j=1}^{M}G_jC_{v_{j}}(R_j).$$ Otherwise, it  will be enough to
sum up the terms
$F_iC_{u_{i}}(P_i),   R_jC_{v_{j}}(R_j)$ such that 
$(P_i,u_i)=(R_j,v_j)$ in order to obtain the canonical form.

  Concerning the product $\phi\psi$, we only need to obtain the  canonical form of the terms of type
   $$F_i C_{u_{i}}(P_i) G_j C_{v_{j}}(R_j)=F_i G_j C_{u_{i}}(P_i)  C_{v_{j}}(R_j).$$ We can assume, without loss of
generality, that the
$u_{i}$--th real root of
$P_i$ is greater than or equal to the $v_{j}$--th real root of $R_{j}$.  Then, by Proposition~\ref{prop1}
$$F_{i} G_j C_{u_{i}}(P_i) C_{v_{j}}(R_j)=F_iG_j R_j C_{u_{i}}(P_i).$$

\begin{ejem}
Let us determine the canonical form for the product 
of $\phi$ and  $\psi$,  the functions considered in
examples~\ref{ejphi} and ~\ref{ejpsi} respectively.
We compute the product of the canonical form of $\phi$ and $\psi$ 
term by term,  taking into account that
$$C_1(x^3+x+1)C_1(x-1)=(x^3+x+1)C_1(x-1),$$ since the only real root 
of $x^3+x+1$ is smaller than $1$. By the same
reasoning,
$$C_1(x^3+x+1)C_2(x^2-2)=(x^3+x+1)C_2(x^2-2),$$
$$C_2(x^2-3)C_1(x-1)=(x-1)C_2(x^2-3)$$  and 
$$C_2(x^2-3)C_2(x^2-2)=(x^2-2)C_2(x^2-3).$$ Therefore, the canonical 
form of
$\phi \psi(x)$ is
$$H_0+H_1C_1(x^3+x+1)+H_2C_1(x-1)+H_3C_2(x^2-2)+H_4C_2(x^2-3)$$ where
$$H_0={x}^{10}+4\,{x}^{9}-2\,{x}^{8}+{x}^{4}+4\,{x}^{3}-2\,{x}^{2},$$
$$H_1=-3\,{x}^{6}-3,$$
$$H_2=-{x}^{9}-2\,{x}^{8}+{x}^{7}+\frac{9}{2}\,{x}^{6}+\frac{3}{2}\,{x}^{5}-5\,{x}^{4}+\frac{9}{2}\, 
{x}^{3} +{\frac
{47}{2}}\,{x}^{2}+27\,x+15,$$
$$H_3=3/2\,{x}^{5}-9/2\,{x}^{3}-3/2\,{x}^{2}+9/2$$ and
$$H_4=-{x}^{6}+6\,{x}^{4}-{x}^{3}-13\,{x}^{2}+9\,x.$$ 
\end{ejem}

\subsection{Compositions} \label{composicion}

We are interested in determining the canonical form of $\phi \circ 
\psi$.  Since we have already seen how to compute the canonical form of sums and products, the only remaining
problem  concerning the composition is the computation of the canonical form of the expressions
$$C_{v_j}(R_j)(\phi(x))=
\left\{ \begin{array}{lcc} 0 &\mbox{ if }& \phi(x)\leq
\alpha_{j} \\ R_{v_j}(\phi(x))  \hfill  &\mbox{ if }& \phi(x)\geq 
\alpha_{j}\end{array} \right.
$$ where $\alpha_j$ is the $v_j$--th real root of $R_j$.

First we must determine the values of $x$ for which $\phi(x) \geq 
\alpha_j$.  Since $\phi$ is a continuous function, it is enough to find the points where
$\phi$ takes the value $\alpha_j$ and then give the sign of 
$\phi(x)-\alpha_j$ in each interval. The points where $\phi(x)$ is equal to $\alpha_j$ are real roots of $R_j
\circ \phi$.  Therefore, we can determine the real roots of $R_j \circ \phi$, which is a piecewise polynomial
function, and take those roots 
$\gamma_1, \ldots, \gamma_s$  such that
$\phi(\gamma_k)=\alpha_j$. Let $\gamma_0=-\infty,$ $\gamma_{s+1}=\infty.$ Thus,
$$C_{v_j}(R_j)(\phi(x))=
\left\{ \begin{array}{lcc} 0 &\mbox{ if }& x \in I_1\\  R_{v_j}(\phi(x))  \hfill  &\mbox{ if }& x \in
I_{2},\end{array} 
\right.
$$ where $I_1$ is the union of the intervals $(\gamma_i, 
\gamma_{i+1})$ such that
$\phi(x)\leq \alpha_j$ and $I_2=\R \setminus I_1.$

Partitioning $I_2$ by the intervals of definition of $\phi$, we obtain
the usual form of a piecewise polynomial function and, by 
Theorem~\ref{canonica},  the canonical form is easily computable.

\begin{ejem}{\rm
Let us compute the canonical form of 
$C_2(x^2-2)(\phi(x))$, being $\phi$ the function defined in  Example~\ref{ejphi}.  First we must determine the values of $x$  for which $\phi(x) \leq \sqrt{2}.$ To this purpose, we have computed the roots $\gamma$  of the function
$$(x^2-2)\circ \phi(x)=
\left\{ \begin{array}{llc}
\!\!(x^2-2)\circ (x^4+4x^3-2x^2) & \!\! \mbox{\ if\ }& \! x\leq \alpha\\
\!\!(x^2-2)\circ(x^4+x^3-2x^2-3x-3) &\!\!\mbox{\ if\ }&\!\! \alpha\leq x\leq \sqrt{3} \\
\!\!(x^2-2)\circ(2x^3+x^2-6x-3) &\!\!\mbox{\ if\ }& \sqrt{3}\leq  x
\end{array}\right.
$$ 
such that $\phi(\gamma)=\sqrt{2}$.

The roots of $(x^2-2)\circ \phi(x)$ are:
\begin{itemize}
\item The roots of $(x^2-2)\circ(x^4+4x^3-2x^2)$ in $(-\infty,\alpha]$,
\item The roots of $(x^2-2)\circ(x^4+x^3-2x^2-3x-3)$ in $[\alpha,\sqrt{3})$, and
\item The roots of $(x^2-2)\circ(2x^3+x^2-6x-3)$ in $[\sqrt{3},\infty)$.
\end{itemize}
One can compare $\alpha$ and the roots of these three polynomials:
\begin{itemize}
\item The first and second real roots of $(x^2-2)\circ(x^4+4x^3-2x^2)$ are less than $\alpha$. Applying $\phi$ gives $\sqrt{2}$ for the first one and $-\sqrt{2}$ for the second one.
\item The third real root of $(x^2-2)\circ(x^4+x^3-2x^2-3x-3)$ is the only one greater than $\alpha$ and less than $\sqrt{3}$, but its image by $\phi$ is $-\sqrt{2}$.
\item The sixth real root of $(x^2-2)\circ(2x^3+x^2-6x-3)$ is the only one greater than $\sqrt{3}$, and its image by $\phi$ is $\sqrt{2}$.
\end{itemize}
Therefore, $\gamma_1$, the first real root of $(x^2-2)\circ(x^4+4x^3-2x^2)$, and $\gamma_2$, the sixth real root of $(x^2-2)\circ(2x^3+x^2-6x-3)$ are the algebraic numbers we looked for. Then we compute that
$$C_2(x^2-2)(\phi(x))=\left\{ \begin{array}{llc}
(x^2-2)\circ \phi(x) &\mbox{\ if\ }& x\leq \gamma_1\cr
0 &\mbox{\ if\ }& \gamma_1\leq x\leq \gamma_2 \cr
(x^2-2)\circ \phi(x) &\mbox{\ if\ }& \gamma_2 \leq x
\end{array}\right.$$  
Since $\gamma_1 \leq \alpha \leq \sqrt{3} \leq \gamma_2$, we have that 
$$C_2(x^2-2)(\phi(x))=\left\{ \begin{array}{llc}
(x^2-2)\circ (x^4+4x^3-2x^2) &\mbox{\ if\ }& x\leq \gamma_1\cr
0 &\mbox{\ if\ }& \gamma_1\leq x\leq \gamma_2 \cr
(x^2-2)\circ(2x^3+x^2-6x-3) &\mbox{\ if\ }& \gamma_2 \leq x
\end{array}\right.$$

Now applying Theorem~\ref{canonica} we obtain the canonical form
$$C_2(x^2-2)(\phi(x))=S_0-C_1(S_0)+C_6(S_2),$$ where $$S_0=(x^2-2)\circ
(x^4+4x^3-2x^2)={x}^{8}+8\,{x}^{7}+12\,{x}^{6}-16\,{x}^{5}+4\,{x}^{4}-2$$ and
$$S_2=(x^2-2)\circ(2x^3+x^2-6x-3)  =4\,{x}^{6}+4\,{x}^{5}-23\,{x}^{4}-24\,{x}^{3}+30\,{x}^{2}+36\,x+7,$$  
both irreducible polynomials.}
\end{ejem}

\subsection{A note on differentiation and integration}

One could think of integration and differentiation as interesting operations on continuous functions. 
While the derivative of a continuous piecewise polynomial function does, in general, 
not exist at the endpoints of the defining intervals, one can easily determine when this case happens.

\begin{lema} Let us consider a continuous piecesewise polynomial function $\phi$ in its canonical form
\[ \phi= F_{0}+\sum_{i=1}^{N}F_{i}C_{u_{i}}(P_{i}).\]

The derivative $\phi'$ is continuous if and only if $P_i$ divides $F_i$ for every $i=1,...,N$. In this case,
\[ \phi'= F'_{0}+\sum_{i=1}^{N}\left(F_{i}'+\frac{F_{i}}{P_{i}}P_{i}'\right)C_{u_{i}}(P_{i}).\]
\end{lema}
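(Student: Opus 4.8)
The plan is to locate all the points at which $\phi'$ can fail to be continuous, compute the jump of $\phi'$ at each of them, and show that this jump vanishes exactly when $P_i \mid F_i$. First I would record two structural facts about the canonical form. Each summand $C_{u_i}(P_i)$ changes its formula only at $\alpha_i$, the $u_i$-th real root of $P_i$, and I would observe that these breakpoints are pairwise distinct: two distinct pairs $(P_i,u_i)$ cannot share a root, since equal monic irreducible polynomials with distinct indices give distinct roots, while distinct monic irreducibles have no common root. Moreover, over $\Q$ (characteristic zero) an irreducible $P_i$ is separable, so $\alpha_i$ is a simple root and $P_i'(\alpha_i)\neq 0$; this is the only place where irreducibility of the $P_i$ is essential. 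Since $F_0$ and every summand are polynomials away from the corresponding $\alpha_i$, the function $\phi$ is a polynomial on each open interval between consecutive breakpoints, so $\phi'$ is automatically continuous everywhere except possibly at the points $\alpha_i$.

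Next I would compute the one-sided derivatives at a fixed $\alpha_i$. In a neighbourhood of $\alpha_i$ every term $F_jC_{u_j}(P_j)$ with $j\neq i$ equals either $0$ or $F_jP_j$, hence is a polynomial there; write $g$ for the sum of $F_0$ and all these terms, a function that is a genuine polynomial near $\alpha_i$. Then near $\alpha_i$ we have $\phi = g + F_iC_{u_i}(P_i)$, where $F_iC_{u_i}(P_i)$ vanishes for $x\le\alpha_i$ and equals $F_iP_i$ for $x\ge\alpha_i$. The left derivative of $\phi$ at $\alpha_i$ is therefore $g'(\alpha_i)$, while the right derivative is $g'(\alpha_i)+(F_iP_i)'(\alpha_i)=g'(\alpha_i)+F_i(\alpha_i)P_i'(\alpha_i)$, using $P_i(\alpha_i)=0$. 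Hence the jump of $\phi'$ at $\alpha_i$ is exactly $F_i(\alpha_i)P_i'(\alpha_i)$.

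From here the equivalence follows quickly. Since $P_i'(\alpha_i)\neq 0$, the jump at $\alpha_i$ vanishes if and only if $F_i(\alpha_i)=0$; and because $P_i$ is the minimal polynomial over $\Q$ of its root $\alpha_i$, the condition $F_i(\alpha_i)=0$ for $F_i\in\Q[x]$ is equivalent to $P_i\mid F_i$. As the breakpoints are distinct, $\phi'$ is continuous if and only if every jump vanishes, that is, if and only if $P_i\mid F_i$ for all $i$. For the formula, I would differentiate $\phi$ termwise on the complement of the breakpoints: the term $F_iC_{u_i}(P_i)$ contributes $0$ on $x<\alpha_i$ and $(F_iP_i)'=F_i'P_i+F_iP_i'=\bigl(F_i'+\tfrac{F_i}{P_i}P_i'\bigr)P_i$ on $x>\alpha_i$, which in both cases equals $\bigl(F_i'+\tfrac{F_i}{P_i}P_i'\bigr)C_{u_i}(P_i)$. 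Thus off the breakpoints $\phi'=F_0'+\sum_i\bigl(F_i'+\tfrac{F_i}{P_i}P_i'\bigr)C_{u_i}(P_i)$. When $P_i\mid F_i$, each coefficient $F_i'+\tfrac{F_i}{P_i}P_i'$ is a polynomial, so the right-hand side is a continuous piecewise polynomial function; it agrees with the continuous function $\phi'$ off the finite set of breakpoints and hence everywhere.

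The step requiring the most care is the bookkeeping that isolates a single active summand at each breakpoint, which rests on the distinctness of the $\alpha_i$, together with the separability argument guaranteeing $P_i'(\alpha_i)\neq0$; without the latter the clean criterion $F_i(\alpha_i)=0$ would not translate the vanishing of the jump into a divisibility condition.
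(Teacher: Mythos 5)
Your proposal is correct and follows essentially the same route as the paper's proof: both isolate the breakpoints $\alpha_i$, reduce continuity of $\phi'$ to the vanishing of the one-sided derivative jump $(F_iP_i)'(\alpha_i)$, translate that via irreducibility (separability) of $P_i$ into the divisibility condition $P_i\mid F_i$, and then obtain the stated formula by termwise differentiation. Your write-up is somewhat more explicit about the distinctness of the breakpoints and about why $P_i'(\alpha_i)\neq 0$, but these are refinements of the same argument rather than a different approach.
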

\begin{proof}
  Let us call $\alpha_i$ the $u_i-th$ real root of $P_i$.
The function $\phi$ is differentiable at the point $\alpha_i$ if and only if the left and right derivatives are equal at $\alpha_i$, i.e. 
$$\left(F_0+\sum_{j=1}^{i-1}F_jP_j\right)'(\alpha_i)= \left(F_0+\sum_{j=1}^{i}F_jP_j\right)'(\alpha_i).$$
This equality holds true if and only if $\left(F_iP_i\right)'(\alpha_i)=0$. Since $F_iP_i(\alpha_i)=0$, we have that $\alpha_i$ is a multiple root of
$F_iP_i$. Taking into account that $P_i$ is irreducible and has $\alpha_i$ as a root, we conclude that $\phi$ is differentiable at
$\alpha_i$ if and only if $P_i$ divides $F_i$
in $\Q[x].$   
 When $P_i$ divides $F_i$ for every $i=1,...,N$, the derivative $\phi'$ is clearly continuous and, noticing that 
 $$C_{v_i}\left((F_iP_i)'\right)=\left(F_i'+\frac{F_i}{P_i}P_i'\right)C_{u_i}(P_i),$$ 
 where $v_i$ is the position of $\alpha_i$ as real root of $(F_i P_i)',$ we have that 
 $$ \phi'= F'_{0}+\sum_{i=1}^{N}\left(F_{i}'+\frac{F_{i}}{P_{i}}P_{i}'\right)C_{u_{i}}(P_{i}).$$
\end{proof}

Regarding integration, it has been treated for piecewise functions in \cite{JLMR} and \cite{JR}. However, while the primitive of a continuous piecewise polynomial function remains continuous, the field of definition of the polynomial must, in general, be extended. Think, for example of the function $\psi=3C_2(x^2-2)$, i.e.
\[\psi(x)= \left\{ \begin{array}{cll}0 & \mbox{ if } & x \leq \sqrt{2}
\\ 3x^2-6 & \mbox{ if } & x \geq \sqrt{2} \end{array} \right.\]
Then any of its continuous primitives has the shape:
\[\phi(x)= \left\{ \begin{array}{cll} 
a-4\sqrt{2} & \mbox{ if }  & x \leq \sqrt{2},\\ 
x^3-6x+a & \mbox{ if } & x \geq \sqrt{2}, 
\end{array} \right.\]
where $a\in \R$. It is obvious that at least one of the two pieces of $F$ is defined by a polynomial that is not in $\Q[x]$, 
so $\phi\not\in{\cal CP}\left(\Q[x]\right)$.

\section{The evaluation of $C_i(P)$}\label{evaluacion}

{\rm  
In this section we introduce two methods of evaluation of $C_i(P)$.
}

\subsection{The simplest alternative}

{\rm  
Perhaps the most immediate way to evaluate $C_i(P)$ could be the following one. We have $a_i$ and $b_i$ from the moment of the definition of $C_i(P)$. Then:
\begin{itemize}
\item If $x<a_i$, then $C_i(P)(x)=0$.
\item If $x>b_i$, then $C_i(P)(x)=P(x)$.
\item If $x\in[a_i,b_i]$, since $P$ is irreducible and monic, it is easy to check that the sign of $P$ in $\alpha_i$ is sign$_{a_i}=(-1)^{i+\deg(P)-1}$ and the sign in $b_i$ is sign$_{b_i}=(-1)^{i+\deg(P)}$. So, if sign$(P(x))=$sign$_{b_i}$, then $C_i(P)(x)=P(x)$. Otherwise $C_i(P(x))=0$.
\end{itemize}
}

\subsection{A closed form solution}

{\rm  
While the former method seems quite simple and fast, it is worth introducing a different one with the appeal of a closed formula and gives a constructive approach to Pierce--Birkhoff conjecture.
}

In this subsection we show that every $C_i(P)$ can be evaluated at any  rational (or real value) 
with just a rational isolating interval $[a_i,b_i]$ of the $i$--th real root $\alpha_i$ of $P$ (i.e. $\alpha_{i-1}<a_i\le\alpha_i\le b_i<\alpha_{i+1},$ $a_i, b_i \in \Q$). {Observe that, if $\alpha_i$ is rational (and then $P$ is linear), the evaluation problem is trivial, so we will suppose this is not the case and then we have an isolating interval of positive length.}
For that we show that each function $C_i(P)(x)$ has a rational expression in terms of the polynomial $P$, its absolute value $|P|$
and very simple piecewise polynomial functions in the canonical basis whose defining intervals have rational endpoints.  
In fact, 
we prove that each function $C_i(P)(x)$ is a semipolynomial over $\Q$, 
i.e. a function 
that  can be described as the
composition of polynomials in
$\Q[x]$ and the absolute value function $\vert\bullet\vert$.
This representation provides a new algorithm for evaluating $C_i(P)$ that avoids
 the inherent exponential complexity of the algorithm in \cite{D} or \cite{M} . Anyway it is still  open if the
algorithm presented in this section has or not a polynomial complexity while the first performed experiments show a
very good practical behaviour.

The procedure to express $C_i(P)$ as a semipolynomial is based on a hint a referee gave to us: Given an isolating interval, the expression
\[
C_i(P)(x)=\left\{
\begin{array}{ll}
0&\mbox{ if }x\le a_i\\
0&\mbox{ if }a_i\le x\le b_i\mbox{ and } sign P(x)=sign P(a_i)\\
P(x)&\mbox{ if }a_i\le x\le b_i\mbox{ and }  sign P(x)=sign P(b_i)\\
P(x)&\mbox{ if }x\ge b_i
\end{array}
\right.
\]
allows to easily evaluate $C_i(P)$.


To obtain the expression of $C_i(P)$ as a semipolynomial, let us denote
$n=\deg(P)$,
$\{\alpha_1,\ldots,\alpha_r\}=\{\alpha\in \R: P(\alpha)=0\}$, $\alpha_0=-\infty$ and $\alpha_{r+1}=+\infty$. Firstly, we only need to study the case  where $P$ is squarefree because
$$\widetilde{P}=\frac{P}{\gcd(P,P^{(1)})}\quad\Longrightarrow\quad  C_i(P)= \gcd(P,P^{(1)})C_i(\widetilde{P})$$
Also we can assume that $i\in\{1,\ldots,r\}$ because 
$$C_0(P)=P\quad\quad \mbox{ and }\quad\quad C_{r+1}(P)=0.$$

Now we define the following semipolynomials:
$$[P]^+=\sup\{P,0\}=\frac{P+\vert P\vert}{2},$$
$$[P]^-=\inf\{P,0\}=\frac{P-\vert P\vert}{2}.$$
In the easiest case: $a_i=\alpha_i$ or $b_i=\alpha_i$ (just checked by computing $P(a_i)$ and $P(b_i)$), we know $\alpha_i$, which is rational and
$$P(x)=(x-\alpha_i)Q(x).$$
In this case, $C_i(P)(x)=[x-\alpha_i]^+Q(x)$. 

In general, we have $a_i<\alpha_i<b_i$
so that we can choose any $a'\in[a_i,\alpha_i)$ and $b'\in(\alpha_i,b_i]$. Let us define the semipolynomials
\[
f(x):=\left[\frac{x-a_i}{a'-a_i}\right]^+-\left[\frac{x-a'}{a'-a_i}\right]^+=\left\{\begin{array}{ll}
0&\mbox{ if }x<a_i\\
\displaystyle\frac{x-a_i}{a'-a_i}&\mbox{ if }a_i\le x\le a'\\
1&\mbox{ if }a'<x\\
\end{array}\right.
\]
\[
g(x):=\left[\frac{x-b'}{b_i-b'}\right]^+-\left[\frac{x-b_i}{b_i-b'}\right]^+=\left\{\begin{array}{ll}
0&\mbox{ if }x<b'\\
\displaystyle\frac{x-b'}{b_i-b'}&\mbox{ if }b'\le x\le b_i\\
1&\mbox{ if }b_i<x\\
\end{array}\right.
\]
which are continuous piecewise polynomial functions with canonical forms
$$f(x)=\frac{1}{a'-a_i}\left(C_1(x-a_i)-C_1(x-a') \right)$$
$$g(x)=\frac{1}{b_i-b'}\left(C_1(x-b')-C_1(x-b_i) \right)$$

We will use $[P]^+, [P]^-, f$ and $g$ to prove the following result:

\begin{teor}\label{nuevo}
Let $P$ be a squarefree polynomial in $\Q[x]$ of degree $n$ with $r$ real roots $\alpha_1<...<\alpha_r$. Let $a_i$, $a'$, $b_i$ and $b'$ be real numbers such that $\alpha_{i-1}<a_i<a'<\alpha_i<b'<b_i<\alpha_{i+1}$. Then
$$C_i(P)(x)=\frac{1}{b_i-b'}[{P}]^{signP(a_i)}(x)\left(C_1(x-b')-C_1(x-b_i) \right)+ \frac{1}{a'-a_i}[{P}]^{signP(b_i)}(x)\left(C_1(x-a_i)-C_1(x-a') \right)$$

\end{teor}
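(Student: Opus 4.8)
The plan is to recognize that the claimed formula is nothing but
$C_i(P)=[P]^{\mathrm{sign}\,P(a_i)}\,g+[P]^{\mathrm{sign}\,P(b_i)}\,f$, where $f$ and $g$ are the two localizing functions introduced just before the statement. Substituting their canonical forms $g(x)=\frac{1}{b_i-b'}\left(C_1(x-b')-C_1(x-b_i)\right)$ and $f(x)=\frac{1}{a'-a_i}\left(C_1(x-a_i)-C_1(x-a')\right)$ into the right-hand side reproduces the theorem verbatim. So the first step is to rewrite the target in this more transparent form; everything then reduces to a pointwise comparison of two continuous functions.

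Next I would exploit squarefreeness. Since $P$ is squarefree, $\alpha_i$ is a simple root, so $P$ changes sign there; combined with the isolating-interval hypothesis (no root of $P$ other than $\alpha_i$ lies in $[a_i,b_i]$), this shows that $P$ keeps the constant sign $\mathrm{sign}\,P(a_i)$ on $[a_i,\alpha_i)$ and the opposite constant sign $\mathrm{sign}\,P(b_i)=-\mathrm{sign}\,P(a_i)$ on $(\alpha_i,b_i]$. In particular $\{[P]^{\mathrm{sign}\,P(a_i)},[P]^{\mathrm{sign}\,P(b_i)}\}=\{[P]^+,[P]^-\}$, a fact I will use twice.

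I then verify the identity region by region, using the ordering $a_i<a'<\alpha_i<b'<b_i$ and the explicit step/ramp descriptions of $f$ and $g$. For $x<a_i$ both $f$ and $g$ vanish, so the right side is $0=C_i(P)(x)$. For $a_i\le x<\alpha_i$ one has $g=0$, killing the first term, while $P$ has sign $\mathrm{sign}\,P(a_i)$, so $[P]^{\mathrm{sign}\,P(b_i)}=0$ kills the second term; hence the right side is $0=C_i(P)(x)$, and the same holds at $x=\alpha_i$. For $\alpha_i<x\le b_i$ one has $f=1$ and $P$ of sign $\mathrm{sign}\,P(b_i)$, so $[P]^{\mathrm{sign}\,P(a_i)}=0$ kills the first term while $[P]^{\mathrm{sign}\,P(b_i)}=P$; hence the right side equals $P=C_i(P)(x)$.

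The one slightly delicate region is $x>b_i$, where $P$ may change sign again at $\alpha_{i+1},\alpha_{i+2},\dots$, so neither $[P]^+$ nor $[P]^-$ individually equals $P$. This is where the construction pays off: there $f=g=1$, so the right side collapses to $[P]^{\mathrm{sign}\,P(a_i)}+[P]^{\mathrm{sign}\,P(b_i)}=[P]^++[P]^-=\frac{P+|P|}{2}+\frac{P-|P|}{2}=P=C_i(P)(x)$, independently of the sign of $P$. This identity $[P]^++[P]^-=P$ is exactly what glues the two localized half-functions back into $P$ on the far right and is the main point to get right; the remaining regions are immediate from the piecewise formulas for $f$ and $g$. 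The case $\mathrm{sign}\,P(a_i)=-$ is entirely symmetric.
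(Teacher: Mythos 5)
Your proof is correct and follows essentially the same route as the paper: both rewrite the right-hand side as $[P]^{\mathrm{sign}\,P(a_i)}g+[P]^{\mathrm{sign}\,P(b_i)}f$, handle $(-\infty,a_i)$ and $(b_i,\infty)$ via $f=g$ there together with $[P]^++[P]^-=P$, and settle $[a_i,\alpha_i)$ and $(\alpha_i,b_i]$ by the constancy (and opposition) of the sign of $P$ on each side of the simple root $\alpha_i$. Your region-by-region phrasing, noting that one bracket vanishes on each middle interval, is just a mild repackaging of the paper's two-case analysis on $\mathrm{sign}\,P(a_i)$.
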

\begin{proof}
We can write the expression on the right side of the equation as 
$[{P}]^{signP(a_i)}(x)g(x)+[{P}]^{signP(b_i)}(x)f(x)$.
 Since $f$ and $g$ are equal in $(-\infty, a_i), (b_i, \infty)$ and $[{P}]^++[{P}]^-=P$, it is obvious that 
 $[{P}]^{signP(a_i)}g+[{P}]^{signP(b_i)}f$ is $0$ and $P$ respectively in $(-\infty, a_i)$ and  $(b_i, \infty).$ It is also clearly $0$ at $\alpha_i.$ 
 It remains to study this function in the intervals $[a_i,\alpha_i)$ and $(\alpha_i, b_i].$ Since $signP(x)=signP(a_i)$ for every $x \in [a_i,\alpha_i)$ and
 $signP(x)=signP(b_i)$ for every $x \in (\alpha_i,b_i]$ we have:
 $$\mbox{ if } signP(a_i)=+ \mbox{ then } [{P}]^{+}g+[{P}]^{-}f=\left\{\begin{array}{ccc}0 & \mbox{in} & [a_i,\alpha_i)\\P & \mbox{in} & (\alpha_i,b_i] \end{array}\right.$$
 $$\mbox{ if } signP(a_i)=- \mbox{ then } [{P}]^{-}g+[{P}]^{+}f=\left\{\begin{array}{ccc}0 & \mbox{in} & [a_i,\alpha_i)\\P & \mbox{in} & (\alpha_i,b_i] \end{array}\right.$$
 We conclude that $[{P}]^{signP(a_i)}g+[{P}]^{signP(b_i)}f=C_i(P).$ 
\end{proof}

\begin{remark}
Since the isolating interval of the $i$-th root is computed when passing a piecewise polynomial function to canonical form, the only extra time will be choosing $a'\in[a_i,\alpha_i)$ and $b'\in(\alpha_i,b_i]$.
\end{remark}

\begin{ejem}\label{ejemplonuevo}{\rm  
To compute the above expressions for the functions $\{C_{i}(P)\}_i$ for $P=x^3-3x+1$, we first realize that $-3<-2<\alpha_1<-1<0<\alpha_2<\frac{1}{2}<1<\alpha_3<2<3$, where the $\alpha_i$ are the roots of $P$ (just by checking the sign of $P(x)$ for $x=-3,-2,-1,0,\frac{1}{2},1,2,3$). 
Then by Theorem \ref{nuevo}, it is obvious that:
\[
C_1(P)(x)=[{P}]^-(x)\left(C_1(x+1) -C_1(x)\right)+ [{P}]^+(x)+\left(C_1(x+3) -C_1(x+2)\right),
\]
\[
C_2(P)(x)=2[{P}]^+(x)\left(C_1\left(x-\frac{1}{2}\right) -C_1(x-1)\right)+ [{P}]^-(x)\left(C_1(x+1) -C_1(x)\right),
\]
\[
C_3(P)(x)=[{P}]^-(x)\left(C_1(x-2)-C_1(x-3)\right)+2[{P}]^+ (x)\left(C_1\left(x-\frac{1}{2}\right) -C_1(x-1)\right).
\]
}\end{ejem}

\section{On complexity}\label{complejidad}

{{\rm
\begin{remark}
The canonical form in this paper should be, in general, lighter than the piecewise standard one and those in \cite{VM} and \cite{C}. The point is that the classical methods to represent piecewise polynomial functions store the polynomials that define the function in the different intervals (the $Q_i$ in (\ref{StFrm})) and the extremes of such intervals (the roots $\alpha_j$), which could be algebraic numbers (and then their minimal polynomial must also be kept). In this new canonical form, one still has to store one polynomial for every interval (all $F_i$ in (\ref{CnFrm})) and also one polynomial for every extreme (the $P_i$). However, 
\[Q_j(x)=F_{0}(x)+\sum_{i=1}^{j}F_{i}(x)P_{i}(x),\]
which is expected to have greater degree than $F_i$ in general. So the representation and storage complexity of the piecewise polynomial functions when considered in canonical form (i.e. storing all $F_i$ and all $P_i$ with the intervals separating $\alpha_i$ from the other roots) should be less than that when considered in standard form, since both the bigger $Q_i$ and the $\alpha_i$, whose information include $P_i$ and the isolating interval, must be stored.

As an example, the piecewise polynomial function in Example \ref{ejpsi}, is 145 bytes long in Maple (from the command "{\rm length}") while the canonical form in sage returns 72 for Python's command "{\rm getsizeof}".
\end{remark}
}}

{{
When comparing with other forms, the sum with the canonical form introduced in this paper is clearly easier to manage, since we just check equalities of monic polynomials and then we either sum polynomials or add terms to a list. 
}}

{{
Given $\phi$ and $\psi$ piecewise polynomial continuous functions in canonical form with $N$ and $M$ summands each, their product takes $(N-1)(M-1)$ comparisons of algebraic numbers, $2NM$ products of polynomials and $MN-M-N+1$ sums of polynomials (since, at most, the product of two piecewise functions has as many breakpoints as the two functions together: $N+M-1$). 
}}

{{Composition $\phi\psi$ involves the resolution of $O(MN)$ polynomial equations ($M$ for each summand of $\phi$), a varying number (depending on the number of real roots of the polynomials, $N$ and $M$) of algebraic number comparisons, up to $O(NM)$ compositions of polynomials (again $M$ for each summand of $\phi$). The complexity of the result depends more on the number of real roots of the polynomials we use than on $N$ or $M$.
}}

{{
The first method of evaluation, when computing $\phi(x)$ needs to compare $x$ with $N-1$ algebraic numbers and then we need to evaluate up to $N$ polynomials and sum them up to $N$ resulting numbers.
}}

{{
Finally, for the second method of evaluation, we can assume that 
\[a',\ b',\ \frac{1}{b_i-b'},\mbox{ and } \frac{1}{a_i-a'}\]
 are already computed (observe that $a_i$ and $b_i$ are already in the description of the algebraic number $\alpha_i$), since it can be done just once for all evaluations (in fact, they can be gotten when the canonical form is computed, and storing them, as rationals, is not too expensive). Then we compute $\phi(x)$ just by evaluating $N$ rational (with the license of taking absolute values) functions and summing the results.
}}

{{\rm
\begin{remark}
If we want to preserve an order in the summands (for instance we could order them by the order of the roots $u_i$ of the polynomials $P_j$), we would complicate the sum with some comparisons of algebraic numbers. On the other side, we would remove comparisons from product (reduced to $M+N-2$ by using the order among the roots) and evaluation (reduced $O(\log_2(N))$ if we use a binary search as in \cite{C}). Moreover, composition would also be simplified if we store the order data of the roots.
\end{remark}
}}

{{\rm
\begin{ejem}
The canonical form has been implemented in Sage as a class\footnote{The source code is available at  
\url{http://www.mat.ucm.es/~jcaravan/Paquetillos/CPWPF_SAGE02.sage}} and tested running on an AMD Turion 64$\times2$ at 1,9 GHz with 2GB RAM under Ubuntu 14.04. Given the functions:
\[\begin{array}{l}
\displaystyle\phi(x)=x^3 - 5 +C_1(x^2 - 2)+C_2(x^2 - 2)+(x^3 - 2x + 1)C_1(x^3 - x - 7)\\
\displaystyle\psi(x)=x^6 + 1 -\left(x^5 + x^4 + \frac{1}{2}x^2 + 4x + 5\right)C_1(x - 1) +\left(\frac{3}{2}x - \frac{3}{2}\right)C_2(x^2 - 2),
\end{array}\]
the computer gave the sum
in 4.51 ms. It computed the product
in 84.6 ms. Finally, the composition
took 709 ms and the evaluation of such composition at $5/4$ took 238 $\mu$s.
\end{ejem}
}}

\section{Conclusions} We have introduced a new canonical form for  the elements in the ring of the continuous piecewise 
polynomial functions from $\R$ to $\R$ defined by polynomials in $\Q[x]$. In algebraic  terms, we have shown that this ring agrees
with the
$\Q[x]$--module generated by the functions:
$$\begin{array}{l}\{C_{i}(P):i\in\N, P\in \Q[x] \setminus 
\{0\}\;\hbox{monic and irreducible}\}.\end{array}$$ It has been also shown  how to use
this  canonical form in order to perform sums, products and compositions and to obtain  the corrresponding
canonical form of the result.

We have also presented how to produce a ``rational" representation of  the $C_{i}(P)$ functions allowing its evaluation by performing only operations in $\Q$ and avoiding the use of 
real  algebraic numbers.



To finish, all the obtained results and algorithms can be stated in terms of an  ordered field $\K$ (instead of $\Q$) and a
real--closed field $\F$ (instead of $\R$) containing $\K$. In this case, and when  the real--closed $\F$ is not
archimedean preventing from using isolating intervals, the manipulation of the involved real  algebraic numbers in
the proof of Theorem
\ref{nuevo} must be performed by using Thom's codes (see \cite{BPR}).


%
%
\end{document}